\newtheorem{theorem}{Theorem}
\newtheorem{corollary}[theorem]{Corollary}
\newcommand{\BargmannSet}[1]{\mathcal{B}_{#1}}
\newcommand{\RootSet}[1]{\mathcal{R}_{#1}}
\newcommand{\ceil}[1]{{\lceil#1\rceil}}
\begin{document}

\title{An Elementary Characterization of Bargmann Invariants}
\author{Sagar Silva Pratapsi}
\email[Corresponding author: ]{spratapsi@uc.pt}
\affiliation{CFisUC, Department of Physics, University of Coimbra, P-3004-516 Coimbra,
Portugal}
\author{João Gouveia}
\affiliation{CMUC, Departament of Mathematics, University of Coimbra, P-3001-454 Coimbra, Portugal}
\author{Leonardo Novo}
\affiliation{International Iberian Nanotechnology Laboratory (INL)
 Av. Mestre José Veiga s/n, 4715-330 Braga, Portugal
}
\author{Ernesto F. Galv\~ao}
\email[Corresponding author: ]{ernestogalvao@id.uff.br}
\affiliation{International Iberian Nanotechnology Laboratory (INL)
 Av. Mestre José Veiga s/n, 4715-330 Braga, Portugal
}

\affiliation{Instituto de F\'isica, Universidade Federal Fluminense, Av. Gal. Milton Tavares de Souza s/n, Niter\'oi, RJ, 24210-340, Brazil}
\date{May 2025}

\date{\today}

\begin{abstract}
    Bargmann invariants, also known as multivariate traces of quantum states $\Tr(\rho_1 \rho_2 \cdots \rho_n)$, are unitary invariant quantities used to characterize weak values, Kirkwood-Dirac quasiprobabilities, out-of-time-order correlators (OTOCs), and geometric phases. 
    Here we give a complete characterization of the set $B_n$ of complex values that $n$-th order invariants can take, resolving some recently proposed conjectures. We show that $B_n$ is equal to the range
    of invariants arising from pure states described by Gram matrices of circulant form. We show that both ranges are equal to the $n$-th power of the complex unit $n$-gon, and are therefore convex, which provides a simple geometric intuition. Finally, we show that any Bargmann invariant of order $n$ is realizable using either qubit states, or circulant qutrit states.
\end{abstract}

\maketitle

\section{Introduction}

The characterization of quantum states and their relationships is fundamental to quantum mechanics and its applications. Among the various tools available for this purpose, 
Bargmann invariants \cite{Bargmann64, simon1993Bargmann} have emerged as a fundamental tool to characterize such relationships.
The $n$-th order Bargmann invariants are also known as multivariate traces, due to their form $\Tr(\rho_1 \rho_2 \cdots \rho_n)$, where $\rho_i$ are quantum states in some Hilbert space. They are manifestly invariant under the application of the same unitary on all states, which makes them ideal for describing basis-independent concepts in quantum theory, such as basis-independent coherence \cite{galvao2020quantumandclassical, Desi21} and contextuality \cite{wagner2024inequalities}. They arise naturally in the study of Kirkwood-Dirac quasiprobabilities \cite{arvidssonshukur2024properties} and out-of-time-order correlators (OTOCs) \cite{wagner2024quantum}, crucial for the characterization of quantum chaos and information scrambling. They describe weak values \cite{wagner2024quantum, wagner2023simple} and geometric phases \cite{simon1993Bargmann}, and their measurement can be used to certify the need for complex-valued amplitudes in quantum theory \cite{oszmaniec_measuring_2024, fernandes2024unitaryinvariant}, a resource known as imaginarity \cite{Wu21, Renou21}. Bargmann invariants can be measured using simple Hadamard test circuits \cite{oszmaniec_measuring_2024}; alternatively, using constant quantum depth adaptive circuits \cite{quek24}, or simpler static circuits that make use of available classical information on some of the states \cite{simonov25}.

Despite their importance, a complete characterization of the possible complex values that Bargmann invariants can take has remained an open problem. Previous work has established partial results, in particular a complete characterization of the range for invariants of order 3 \cite{fernandes2024unitaryinvariant} and 4 \cite{Zhang25}. A characterization was given also for the particular case of tuples of states whose Gram matrix is of circulant form \cite{Li25}. The range of complex values that can be taken by invariants of order 5 has remained unknown up to now. Fifth-order invariants of the form $\text{Tr}(\rho_1\rho_2\rho_3\rho_4\rho_5)$ are used to describe observable incompatibility \cite{gao2023measuring} and OTOCs~\cite{halpern2018quasiprobability,wagner2024quantum}. Invariants of arbitrarily high order describe sequential weak measurements \cite{mitchison2007sequential}, extended Kirkwood-Dirac quasiprobability representations \cite{halpern2018quasiprobability}, and recently proposed higher-order OTOCs \cite{abanin25}. Importantly, they also describe geometric phases \cite{simon1993Bargmann,fernandes2024unitaryinvariant}. Already in the early works by Pancharatnam \cite{Pancharatnam56} and Berry \cite{Berry09}, it was noted that the geometric phase, measurable in multiple physical platforms, is the phase of a Bargmann invariant associated with the chosen closed particle trajectory (see \cite{Chruscinski04} for a review). Learning about the range of values that can be taken by Bargmann invariants means we can make novel quantitative predictions about general geometric phase experiments. Besides the incomplete foundational understanding of the varied phenomena we have mentioned, our lack of understanding of the range of Bargmann invariants limits practical applications involving measurements of these invariants. For example, these can be used to quantify nonclassicality in quantum information processing and metrology  \cite{wagner2024quantum, fernandes2024unitaryinvariant, oszmaniec_measuring_2024}, where it is usually associated with negative or complex values. In photonic quantum computation, the characterization of multiphoton indistinguishability relies on the characterization of the joint state of the internal photonic degrees of freedom using measurements of invariants, known in this field as collective photonic phases \cite{shchesnovich2015partial, Menssen2017, Jones2020, oszmaniec_measuring_2024}.

In this work, we provide a complete characterization of the sets of complex values that $n$th-order Bargmann invariants can take. We show that any Bargmann invariant can be realized as a Bargmann invariant of a tuple of pure qutrit states whose Gram matrix has a circulant form, establishing the equivalence between the general and qutrit-circulant cases. Furthermore, we show these sets have a simple geometric structure—they are convex and correspond to powers of regular polygons in the complex plane. Finally, we prove that any Bargmann invariant is realizable using only qubit states, providing a practical foundation for experimental tests of applications.

\subsection{Definitions}
Given an ordered tuple $\rho$ of density matrices $\rho_1, \ldots, \rho_n \in  \mathcal{D}(\mathcal H_d)^n$, where $\mathcal H_d$ is a $d$-dimensional Hilbert space, its Bargmann invariant is the cyclic trace of their product,
\begin{equation}
    \Delta_\rho =
    \Tr \{
        \rho_1 \, \cdots\,  \rho_n
    \}.
\end{equation}
When the states are pure, that is, when $\rho_i = \ketbra{v_i}{v_i}$ for an ordered collection $V$ of unit vectors $\ket{v_1}, \ldots, \ket{v_n} \in \mathcal{H}_d^n$, it is clear that
\begin{equation}
    \Delta_\rho
    =
    \Delta_V
    :=
    \braket{v_1}{v_2} \, \cdots \, \braket{v_n}{v_1}.
\end{equation}
When the ordered tuple $V$ is clear from context, we may write the invariants using the indices of the vectors directly, e.g.\ $\Delta_V = \Delta_{1\,2\,\cdots\,n}$.
This notation is useful when taking the Bargmann invariants in different orders, such as $\Delta_{n\,\cdots\,2\,1} = \braket{v_n}{v_{n-1}} \, \cdots \, \braket{v_2}{v_1}.$
As a consequence of the definition, Bargmann invariants are unchanged under simultaneous transformations of the quantum states by the same unitary operator, or by gauge choices of phase for the wavefunctions.

Our goal is to characterize the sets of complex values that Bargmann invariants can take. We begin by defining $B_n$ as the set of complex values taken by Bargmann invariants of $n$ states, or order $n$, for the general case of mixed states.
It is clear that any Bargmann invariant is a convex combination of Bargmann invariants of pure states, using the spectral decomposition of density matrices.
Since the sets of Bargmann invariants of pure states turns out to be convex, it will suffice to restrict our attention to such states.

Let us then define $\BargmannSet{n, d}$ as the set of all Bargmann invariants of order $n$ and arising from a tuple of pure states of dimension $d$; and $\BargmannSet{n}$ the set of all Bargmann invariants of order $n$, assuming pure states, and regardless of dimension.
If a complex number $z$ belongs to $\BargmannSet{n, d}$, we say that $z$ is \emph{realizable} as a Bargmann invariant of order $n$ and dimension $d$.

To characterize the sets of Bargmann invariants, it is useful to work with the \emph{$n$-th roots} of Bargmann invariants, or \emph{Bargmann roots}, of order $n$,
\(
    \sqrt[n]{
        \braket{v_1}{v_2}
        \cdots
        \braket{v_n}{v_1}
    }
\).
These geometric means of inner products are readily comparable with arithmetic means, a fact that will be useful later.
Therefore, we define the sets
\begin{align}
    \RootSet{n}
    &:=
    \{z\in\mathbb{C}
    \mid
    z^n \in \BargmannSet{n}
    \},
    \\
    \RootSet{n,d}
    &:=
    \{z\in\mathbb{C}
    \mid
    z^n \in \BargmannSet{n,d}
    \}.
\end{align}
Notice that the $n$-th Bargmann roots, as the $n$-th roots of any complex number, are unique up to a multiplicative factor $w_n^k$, where $w_n := e^{i\,2\pi / n}$ is a primitive root of unity, and $0 \leq k <  n$. In other words, the sets $\RootSet{n}$, and $\RootSet{n, d}$ are closed under multiplication by $\omega_n$.

For the study of Bargmann invariants, it is useful to define the Gram matrix $G_V$ with entries $(G_V)_{ij} = \braket{v_i}{v_j},$
where $V \in \mathcal H^n$ is a tuple of $n$ vectors.
The matrix $G_V$ is positive semi-definite, and its rank is equal to the dimension of the vector subspace spanned by the tuple $V$.
If the states are normalized, the corresponding Gram matrix has unit diagonal entries and is said to be normalized.
A complex number $z$ is a Bargmann invariant if and only if
$z=G_{12} G_{23} \cdots G_{n1}$
for some positive semi-definite matrix $G$ with ones on the diagonal~\cite{chefles04}.

In this work, we will deal with tuples of states that give rise to circulant Gram matrices.
An $n\times n$ matrix $G$ is circulant if its rows are obtained from the first one by successive application of the cycle permutation $\sigma = (1,2,\ldots,n)$ i.e.,
\begin{align*}
G &=
g_0 C_n^0 + g_1 C_n^1 + \ldots + g_{n-1} C_n^{n-1} \\
&=
\begin{pmatrix}
    g_0 & g_1 & g_2  & \cdots & g_{n-1} \\
    g_{n-1} & g_0  & g_1 & \cdots &g_{n-2} \\
    & & \vdots & & \\
    g_1 & g_2 & g_3 & \cdots & g_0
\end{pmatrix},
\end{align*}
where $C_n$ is the permutation matrix associated with the permutation $\sigma$~\cite[Section 0.9.6]{HornJohnson}.

If a tuple $V$ of normalized states is such that $G_V$ is a circulant Gram matrix, we say that $V$ is a \emph{circulant tuple of states}.
We will call $\Delta_V$ a \emph{circulant Bargmann invariant},
and $\Delta_V^{1/n}$ a \emph{circulant Bargmann root}.
The set of circulant Bargmann invariants of order $n$ will be called $\BargmannSet{n \mid \mathrm{circ}}$, while the set of \emph{circulant Bargmann roots} of order $n$ will be called $\RootSet{n\mid\mathrm{circ}}$.
We define
 $\BargmannSet{n,d \mid \mathrm{circ}}$
 and 
$\RootSet{n,d\mid\mathrm{circ}}$
similarly, for some dimension $d$.

Finally, for any natural number $n$, the set $\mathcal P_n$ is the regular $n$-gon in the complex plane, i.e., the convex hull of the $n$-th roots of unity,
\begin{equation}
    \mathcal{P}_n
    :=
    \mathrm{conv}\{
        1, 
        \omega_n,
        \dots,
        \omega_n^{n-1}
    \},
\end{equation}
where again $\omega_n = e^{2i\pi/n}$ and $\mathrm{conv}(A)$ stands for the convex hull of the set $A$, which can be defined as the smallest convex set containing $A$.

\begin{figure}[th]
    \centering
    \includegraphics[width=1\linewidth]{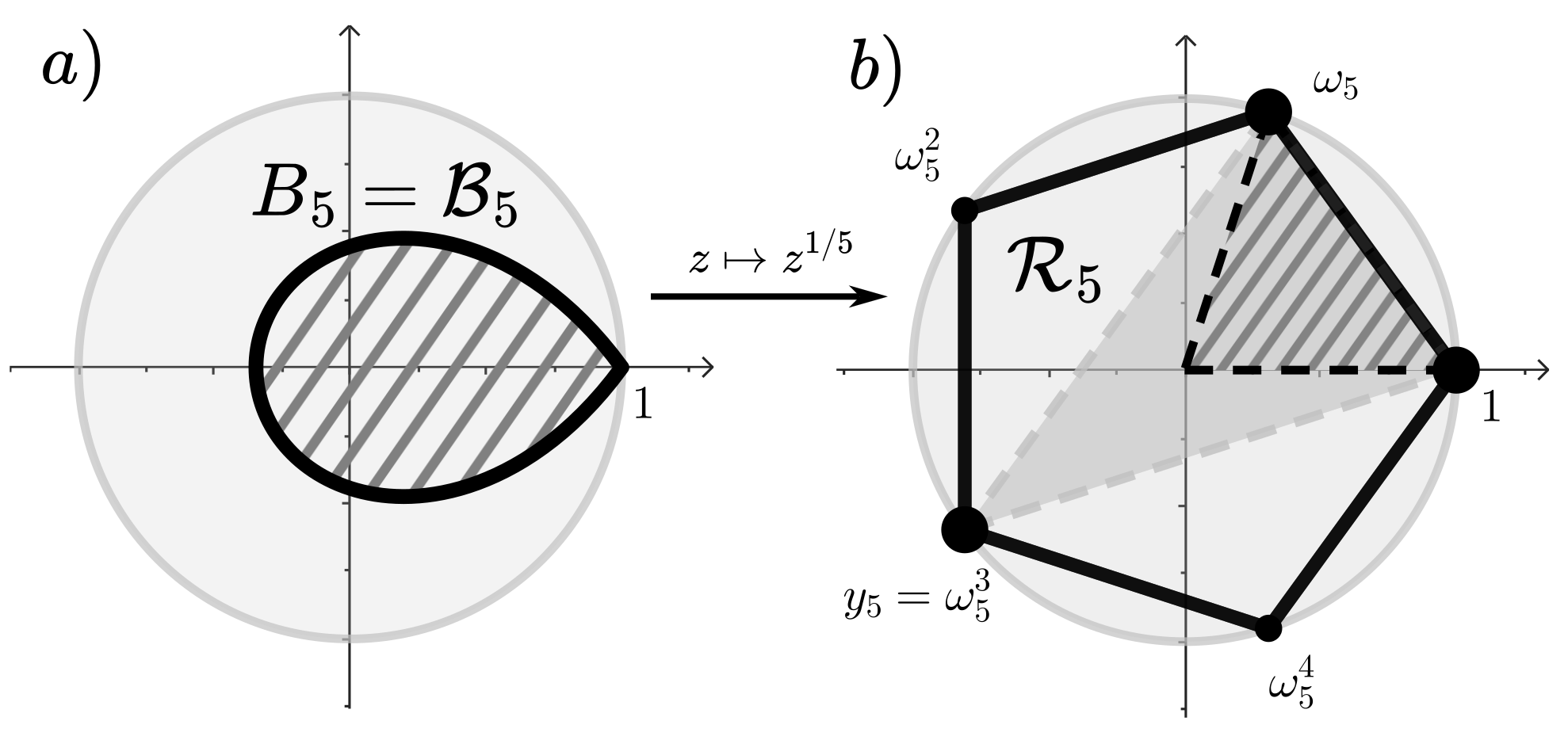}
    \caption{
    Characterization of $n$-th order Bargmann invariants, illustrated here for $n=5$, with the complex unit circle as reference.
    \textbf{a)}
    The range of values that pure Bargmann invariants of order $n$ can take, $B_{n}$, is a teardrop-shaped convex region in the complex plane (line-shaded region). 
    \textbf{b)}
    The values that the $n$-th complex roots of Bargmann invariants of order $n$ can take, $\RootSet{n}$, are the filled unit $n$-gon, $\mathcal{P}_n$.
    We prove this first for circulant tuples of states.
    The vertices, edges and interior of $\mathcal{P}_n$ are roots of Bargmann invariants realizable by states with dimensions 1, 2 (qubits) and 3 (qutrits), respectively, with circulant Gram matrices.
    The principal roots of Bargmann invariants (line-shaded triangle) are always interior to the three vertices $1$, $\omega_n$ and $y_n = \omega_n^\ceil{n/2}$, which imples that any Bargmann invariant may be realized by a circulant tuple of qutrit states.
    }
    \label{fig:1}
\end{figure}

\subsection{Summary of main results}
In this article, we provide a full characterization of the sets of Bargmann invariants, by proving the following results.
\begin{enumerate}
    \item As a main result, we prove that any Bargmann invariant is realizable using pure states associated with a circulant Gram matrix,
    \[
        B_n
        =
        \BargmannSet{n}
        =
        \BargmannSet{n\mid \mathrm{circ}}
        = \mathcal{P}_n^n.
        \]
    With slight abuse of notation, $\mathcal{P}_n^n$ is the set of numbers $z^n$ such that $z$ belongs to the unit $n$-gon $\mathcal{P}_n.$
    Our main contribution is the middle equality; the third equality was proven in \cite{Li25}, while the first follows as a corollary of our work.

    \item
    As a consequence, we show that the set $B_{n}$ is a teardrop-shaped convex region (see Fig.~\ref{fig:1}, panel a), with boundary points
    \begin{equation}
        z^n =
        -\left(
            \frac{\sec(\tau / n)}{\sec(\pi/n)}
        \right)^n
        e^{i\tau},
    \end{equation}
    for $-\pi \leq \tau < \pi.$

   The shape of this boundary was derived in Refs.~\cite{Li25, Zhang25}, and proven to be a boundary for circulant tuples of states in Ref.~\cite{Li25}. Here, we prove that it is also the border of the set of values that can be taken by general Bargmann invariants.

    \item Any realizable Bargmann invariant may be obtained as a Bargmann invariant of qubit states. Alternatively, it may be written using a tuple of states of dimension three (qutrit states) described by a circulant Gram matrix. In other words,
    \[
        \BargmannSet{n}
        =
        \BargmannSet{n,2}
        =
        \BargmannSet{n,3 \mid \mathrm{circ}}.
    \]
    
    Given a Bargmann invariant $\Delta = \abs{\Delta}e^{i\theta}$ of order $n$, its qubit realization is $\braket{x}{v_1} \braket{v_1}{v_2} \cdots \braket{v_{n-1}}{x}$, where
    \begin{equation}
        \ket{v_k}
            = \cos{\varphi} \ket{0}
            +  \sin{\varphi}\,\omega_n^k \ket 1,
        \quad
        1 \leq k < n,
    \end{equation}
    the angle $\varphi$ is given by the equation
    \(
    \cos{(2\varphi)} 
    =
    \tan{(\frac{\pi-\theta}{n})} / \tan{(\frac{\pi}{n})},
    \)
    and $\ket{x}$ is given implicitly as in Theorem~\ref{thm:qubit_realization}.
    
    The qutrit realization is given by a tuple of states of the form
    \begin{equation}
        \ket{u_k}
        =
         a \ket{0}
        +  b \, \omega_n^k \ket{1}
        +  c \, y_n^k \ket {2},
        \quad
        0\leq k < n,
    \end{equation}
    where
    $y_n=\omega_n^{\ceil{n/2}}$ and $a,b,c$ are real numbers such that $a^2+ b^2 + c^2 = 1$ (see Corollary~\ref{corollary:qutrit_realization} for details).

\end{enumerate}
\section
{Bargmann Invariants of Circulant tuples of States}
We begin by characterizing the sets of circulant Bargmann invariants of pure states.
Our main contribution will be presented in the next section, where we show that the set of general Bargmann invariants coincides with the set of circulant Bargmann invariants.

The following theorem was derived in Ref.~\cite{Li25} in a slightly different form (Theorem 1). We provide a shorter a proof here, and point out the realizability of the Bargmann roots using qubit and qutrit states. The theorem is illustrated in Fig.~\ref{fig:1}, panel b).
\begin{theorem}
    The set of circulant Bargmann roots of order $n$ coincides with the unit $n$-gon,
    $\RootSet{n \mid \mathrm{circ}}
    = \mathcal{P}_n$ .
    Furthermore, its vertices, edges and interior points are realizable as circulant Bargmann roots of states with dimensions 1, 2 and 3, respectively.
    \label{thm:circulant_bargmann_roots_convexity}
\end{theorem}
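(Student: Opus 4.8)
The plan is to prove the two inclusions $\RootSet{n\mid\mathrm{circ}} \subseteq \mathcal{P}_n$ and $\mathcal{P}_n \subseteq \RootSet{n\mid\mathrm{circ}}$ separately, and then refine the second inclusion to track the dimension needed.

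For $\RootSet{n\mid\mathrm{circ}} \subseteq \mathcal{P}_n$: Given a circulant tuple $V$ with circulant Gram matrix $G = \sum_k g_k C_n^k$, I would use the fact that circulant matrices are simultaneously diagonalized by the discrete Fourier transform. The eigenvalues are $\lambda_j = \sum_{k=0}^{n-1} g_k \omega_n^{jk}$ for $j = 0,\dots,n-1$, and positive semidefiniteness forces $\lambda_j \geq 0$. The normalization $g_0 = 1$ gives $\sum_j \lambda_j = n$, so $(\lambda_0/n, \dots, \lambda_{n-1}/n)$ is a probability distribution. The Bargmann invariant is $\Delta_V = g_1^n$ (since $G_{12}=G_{23}=\dots=G_{n,1}=g_1$ by the circulant structure... wait, I should check: $G_{i,i+1} = g_1$ for all $i$, and $G_{n,1} = g_1$ as well — yes). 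Then $g_1 = \frac{1}{n}\sum_j \lambda_j \omega_n^{-j}$ by Fourier inversion, which is exactly a convex combination of the points $\omega_n^{-j}$ with weights $\lambda_j/n$. Hence $g_1 \in \mathcal{P}_n$, so $\RootSet{n\mid\mathrm{circ}} \subseteq \mathcal{P}_n$ (using that $\mathcal{P}_n$ is invariant under conjugation and under multiplication by $\omega_n$).

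For $\mathcal{P}_n \subseteq \RootSet{n\mid\mathrm{circ}}$ with dimension control: Run the construction backward. Given any target $g_1 = \sum_j p_j \omega_n^{-j} \in \mathcal{P}_n$ with $p_j \geq 0$, $\sum_j p_j = 1$, set $\lambda_j = n p_j$ and reconstruct the circulant PSD matrix $G$ with unit diagonal. Since $G$ is PSD with unit diagonal, it is the Gram matrix of some unit-vector tuple, realizing $g_1$ as a circulant Bargmann root. The rank of $G$ equals the number of nonzero $\lambda_j$, i.e. the number of nonzero weights $p_j$. A vertex $\omega_n^{-j}$ needs one nonzero weight (dimension $1$); a point on an edge is a convex combination of two consecutive vertices $\omega_n^{-j}, \omega_n^{-(j+1)}$, needing two nonzero weights (dimension $2$); and any interior point of $\mathcal{P}_n$ can be written as a convex combination of at most three vertices by Carathéodory in the plane — but one must check it can always be done with three, which follows by triangulating the polygon from any fixed vertex. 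So every interior point is realizable in dimension $3$.

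The main obstacle I anticipate is the dimension-$3$ claim for interior points: naive Carathéodory gives three vertices but does not immediately respect circulant structure constraints, and one should verify that the reconstructed rank-$3$ circulant matrix genuinely corresponds to a tuple of qutrit states (it does, since PSD rank equals the minimal dimension of a Gram realization, and the explicit vectors can be read off from the Fourier modes, e.g. $\ket{u_k} = \sum_{j:\,p_j\neq 0} \sqrt{p_j}\,\omega_n^{-jk}\ket{e_j}$ up to relabeling). A secondary subtlety is handling the $\omega_n^k$ vs $\omega_n^{-k}$ bookkeeping and the closure of all these sets under multiplication by $\omega_n$, so that "the" root is identified with the correct geometric point; this is routine but should be stated carefully.
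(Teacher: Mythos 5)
Your proposal is correct and follows essentially the same route as the paper: diagonalize the circulant Gram matrix by the discrete Fourier transform, identify the normalized eigenvalues as convex weights on the roots of unity via $g_1=\frac{1}{n}\sum_j\lambda_j\omega_n^{-j}$, reverse the construction for the converse inclusion, and use the rank of $G$ (number of nonzero eigenvalues) together with Carath\'eodory's theorem to get the dimension claims for vertices, edges, and interior points. Your explicit vectors $\ket{u_k}=\sum_j\sqrt{p_j}\,\omega_n^{-jk}\ket{e_j}$ anticipate the construction the paper only gives later in Corollary~\ref{corollary:qutrit_realization}.
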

\begin{proof}
    By definition, a number $z \in \RootSet{n \mid \mathrm{circ}}$ is a circulant Bargmann root of order $n$ if and only if there exists an $n \times n$ circulant Gram matrix $G$ such that $z = G_{01}$ and $G_{ii} = 1$ for $0 \leq i < n$.
    
    Now, $G$ is a circulant Gram matrix if and only if it is diagonalizable as $G = F D F^\dagger$,
    where $F$ is the unitary Fourier matrix ($F_{ij} = \omega_n^{ij} /  \sqrt{n}$, with $w_n = e^{i2\pi / n}$),
    and $D=\mathrm{diag}(\lambda_0, \ldots, \lambda_{n-1})$ is the matrix of eigenvalues $\lambda_0,\ldots,\lambda_{n-1} \geq 0$ (since $G$ must be positive semi-definite)~\cite[2.2.P10]{HornJohnson}. Note also that any choice of $\lambda_i\geq 0$ such that $\sum_i \lambda_i=n$ leads to a valid Gram matrix since $G_{ii}= (F D F^\dagger)_{ii}= \sum_i \lambda_i/n=1. $
    
    In short, $z$ is a circulant Bargmann root of order $n$ if and only if there exists a Gram matrix $G$ such that
    \begin{equation}
    z
    = G_{01}
    = 
    \sum_{i=0}^{n-1}
    \frac{\lambda_i}{n}
    \, \omega_n^{-i},
    \label{eq:circulant_bargmann_root_expression}
    \end{equation}
   where the coefficients $\lambda_i / n$ are non-negative, and their sum is $\Tr{G} / n = 1$.
    Therefore, for any $z$ there is a convex combination of the $n$-th roots of unity $\omega_n^i$, and vice-versa.
    We conclude that $z$ belongs to $\RootSet{n \mid \mathrm{circ}}$ if and only if $z$ belongs to $\mathcal{P}_n$, as wanted.

    From Eq.~\ref{eq:circulant_bargmann_root_expression}, we see that the vertices $\omega_n^{-k}$ correspond to taking the eigenvalues $\lambda_i/n=\delta_{ik},$ which results in rank-1 Gram matrices, and are thus realizable by one-dimensional states.
    The points on the boundary are convex combinations of vertices, so they must correspond to Gram matrices of rank 2 at most, and are thus realizable using two-dimensional states.

    Finally, Carathéodory's theorem guarantees that any point $z$ within $\mathcal{P}_n$ can be written as a convex combination of at most three vertices of $\mathcal{P}_n$. Following the reasoning of the previous paragraph, this implies that $z$ must be realizable as a Bargmann root of three-dimensional states.
\end{proof}
We now translate the theorem to circulant Bargmann invariants, which is the original form of Theorem 1 in Ref.~\cite{Li25}, except for the statement on realizability as qubits and qutrits.
\begin{corollary}
    The set of circulant Bargmann invariants of order $n$, $\BargmannSet{n \mid \mathrm{circ}}$, coincides with $\mathcal{P}_n^n$. Its boundary corresponds to circulant Bargmann invariants of states with dimension 2, and its interior to Bargmann invariants of states with dimension 3.
\end{corollary}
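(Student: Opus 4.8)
The plan is to read this corollary off Theorem~\ref{thm:circulant_bargmann_roots_convexity} by transporting every statement through the $n$-th power map $f\colon\mathbb{C}\to\mathbb{C}$, $f(z)=z^n$. By definition of circulant Bargmann roots, $\BargmannSet{n\mid\mathrm{circ}}=\{z^n : z\in\RootSet{n\mid\mathrm{circ}}\}=f\bigl(\RootSet{n\mid\mathrm{circ}}\bigr)$, and a circulant tuple $V$ that realizes a root $z=(G_V)_{01}$ in dimension $d$ also realizes the invariant $z^n=\Delta_V$ in dimension $d$ with the \emph{same} states, since for a circulant Gram matrix all super-diagonal entries $(G_V)_{i,i+1}$ coincide and hence $\Delta_V=(G_V)_{01}^{n}$. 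Thus $f$ preserves the realizing dimension. Since the Theorem gives $\RootSet{n\mid\mathrm{circ}}=\mathcal{P}_n$, the set identity $\BargmannSet{n\mid\mathrm{circ}}=f(\mathcal{P}_n)=\mathcal{P}_n^n$ is immediate; and every $w=z^n\in\mathcal{P}_n^n$ is, by the Carathéodory step of the Theorem, a circulant Bargmann invariant of the three-dimensional tuple realizing $z\in\mathcal{P}_n$, so dimension three suffices throughout $\mathcal{P}_n^n$, in particular in its interior.

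It remains to handle the boundary, where I would show $\partial(\mathcal{P}_n^n)\subseteq f(\partial\mathcal{P}_n)$. Granting this, the Theorem makes every point of $\partial\mathcal{P}_n$ (on an edge or at a vertex) a circulant Bargmann root of a tuple of dimension at most $2$, and $f$ turns that tuple into a realization of the corresponding invariant in dimension at most $2$ --- indeed dimension $1$ at $w=1$, the common image of the vertices. To get the inclusion I would use that $f(z)=z^n$ is an \emph{open} map $\mathbb{C}\to\mathbb{C}$: it is non-constant and holomorphic, so the open mapping theorem applies, the vanishing of $f'$ at the origin being harmless since $z\mapsto z^n$ is still open at $0$ and, in any case, $0\in\mathrm{int}\,\mathcal{P}_n$. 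Hence $f(\mathrm{int}\,\mathcal{P}_n)$ is an open subset of the compact set $\mathcal{P}_n^n$ and so contains none of its boundary points; writing $\mathcal{P}_n^n=f(\mathrm{int}\,\mathcal{P}_n)\cup f(\partial\mathcal{P}_n)$ then forces $\partial(\mathcal{P}_n^n)\subseteq f(\partial\mathcal{P}_n)$, as desired.

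The only genuinely nonroutine point here is this last topological step: the boundary of an image need not be the image of the boundary, so openness of $f$ is really needed to exclude interior points of $\mathcal{P}_n$ mapping onto $\partial(\mathcal{P}_n^n)$. Everything else --- the set equality $\BargmannSet{n\mid\mathrm{circ}}=\mathcal{P}_n^n$ and the sufficiency of dimension $3$ in the interior and $2$ on the boundary --- is a direct restatement of Theorem~\ref{thm:circulant_bargmann_roots_convexity}. We neither claim nor need a converse: for even $n$ the centre $0$ is already realized by a rank-two circulant Gram matrix, so interior points do not in general require dimension $3$.
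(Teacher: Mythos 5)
Your proof is correct and follows the same route as the paper: push Theorem~\ref{thm:circulant_bargmann_roots_convexity} through the $n$-th power map $f_n(z)=z^n$, noting that a circulant tuple realizing the root $z=(G_V)_{01}$ in dimension $d$ realizes the invariant $z^n=\Delta_V$ in the same dimension. The one place you go beyond the paper is the boundary step: the paper justifies $\partial(\mathcal{P}_n^n)\subseteq f_n(\partial\mathcal{P}_n)$ by appealing only to continuity of $f_n$, which is not sufficient in general (a continuous map can send interior points to boundary points of the image, as $x\mapsto x^2$ on $[-1,1]$ shows at $0$); your invocation of the open mapping theorem --- so that $f_n(\mathrm{int}\,\mathcal{P}_n)$ is open, hence contained in $\mathrm{int}(\mathcal{P}_n^n)$ and disjoint from $\partial(\mathcal{P}_n^n)$ --- is exactly what is needed to make the argument airtight. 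Your closing remark that only realizability (not necessity) of dimension $3$ in the interior is being claimed is also the correct reading of the statement.
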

\begin{proof}
    This is a straightforward consequence of the fact that $\BargmannSet{n\mid\mathrm{circ}}$ is obtained from $\RootSet{n\mid\mathrm{circ}}$ by applying the map $f_n:z\to z^n$. Since $f_n$ is continuous, boundary points of $\BargmannSet{n\mid\mathrm{circ}}$ must be images of boundary points of $\RootSet{n\mid\mathrm{circ}}$, which, according to the previous theorem, 
    are realizable by states of dimension 2.
    Any interior point of $\BargmannSet{n\mid\mathrm{circ}}$ is obtained from some point of $\RootSet{n\mid\mathrm{circ}}$, so, according to the previous theorem, it is realizable by states of dimension 3 or lower; if lower, we may embed the states in a three-dimensional space, thus proving our statement.
\end{proof}

We now show that the sets $\BargmannSet{n\mid\mathrm{circ}}$ are convex and are teardrop-shaped (Fig.~\ref{fig:1}, panel a), using a variation of an argument found in Ref.~\cite[Theorem 2]
{Zhang25}.
The shape of the boundary was described in Refs.~\cite{Li25, Zhang25}.
\begin{corollary}
    The set $\BargmannSet{n\mid\mathrm{circ}}$ is convex.
    For $n \geq 3$, it is the convex hull of the boundary points
    \begin{equation}
        z^n
        =
        -\left(
            \frac{\sec(\tau / n)}{\sec(\pi/n)}
        \right)^n
        e^{i\tau},
    \end{equation}
    where $-\pi \leq \tau \leq \pi$.
    Otherwise, $\BargmannSet{1} = \{1\}$, and $\BargmannSet{2} = [0, 1]$.
    \label{cor:circulant_bargmann_invariant_convexity}
\end{corollary}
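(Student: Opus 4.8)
The plan is to leverage Theorem~\ref{thm:circulant_bargmann_roots_convexity} and its first corollary, which already identify $\BargmannSet{n\mid\mathrm{circ}}$ as the image $\mathcal{P}_n^n$ of the regular $n$-gon $\mathcal P_n$ under $f_n\colon z\mapsto z^n$. The degenerate cases are immediate: for $n=1$, $\mathcal P_1=\{1\}$ so $\BargmannSet{1}=\{1\}$; for $n=2$, $\mathcal P_2$ is the segment $[-1,1]$ and $z\mapsto z^2$ maps it onto $[0,1]$. So the real content is the statement for $n\ge 3$: that $\mathcal{P}_n^n$ is convex and equals the convex hull of the stated boundary curve.

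First I would set up coordinates on the boundary of $\mathcal P_n$. The $n$-gon $\mathcal P_n$ has the $n$-th roots of unity as vertices; by the reflection symmetry I can parametrize the edge joining $\omega_n^{-1}=e^{-2\pi i/n}$ to $\omega_n=e^{2\pi i/n}$ — i.e.\ the edge whose supporting line is $\mathrm{Re}(z)=\cos(\pi/n)$ — and obtain the rest of $\partial\mathcal P_n$ by multiplying by powers of $\omega_n$. A point on that edge can be written as $z=\cos(\pi/n)\sec(\phi)\,e^{i\phi}$ for $-\pi/n\le\phi\le\pi/n$, since such a $z$ has real part exactly $\cos(\pi/n)$. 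Raising to the $n$-th power and writing $\theta=n\phi\in[-\pi,\pi]$ gives $z^n=\bigl(\cos(\pi/n)\sec(\theta/n)\bigr)^n e^{i\theta}$. Accounting for the factor $\omega_n^{\pm 1}$ on the neighbouring edges: their images under $f_n$ are multiplied by $\omega_n^{\pm n}=1$, so in fact every edge of $\mathcal P_n$ maps to the \emph{same} curve in the $z^n$-plane, but traced in the angular window $\theta\in[-\pi,\pi]$ — except that the edge opposite to the real axis (or the two edges straddling the point $-1$ when $n$ is even) is the one whose image passes through the negative real axis, which is why the natural way to present the curve is with the overall minus sign, shifting $\theta$ by $\pi$: $z^n=-\bigl(\sec(\theta/n)/\sec(\pi/n)\bigr)^n e^{i\theta}$, $-\pi\le\theta\le\pi$. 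A short check that this single-parameter curve is the image of the full boundary $\partial\mathcal P_n$ (using that $f_n$ restricted to each closed edge is injective up to the shared endpoints, which all map to $1$) completes the description of $\partial(\mathcal P_n^n)$ as a set.

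The heart of the argument is convexity of $\mathcal P_n^n$, and here I would follow the route indicated by the reference to \cite[Theorem 2]{Zhang25}: show that the boundary curve $r(\theta)=-\bigl(\sec(\theta/n)/\sec(\pi/n)\bigr)^n e^{i\theta}$ bounds a convex region, equivalently that the radial function $\rho(\theta):=\bigl(\sec(\theta/n)/\sec(\pi/n)\bigr)^n$ (the distance from the origin to the boundary along direction $\theta+\pi$) satisfies the standard convexity criterion for a polar curve, $\rho^2+2(\rho')^2-\rho\rho''\ge 0$ on $(-\pi,\pi)$ — together with the observation that the full region is star-shaped about the origin, since $\mathcal P_n$ is star-shaped about $0$ and $f_n$ maps the ray segment $[0,w]$ onto the ray segment $[0,w^n]$. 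Computing $\rho'/\rho=\tan(\theta/n)$ and $(\log\rho)''=\tfrac1n\sec^2(\theta/n)$ reduces the inequality to something elementary in $t=\theta/n\in(-\pi/n,\pi/n)$; I expect it to come down to verifying $1+(n+1)\tan^2 t - n\sec^2 t\cdot(\text{something})\ge 0$, which holds comfortably on that interval because $|t|<\pi/n\le\pi/3$. Once $\mathcal P_n^n$ is known to be star-shaped about $0$ with convex boundary curve, it is convex, and being compact it equals the convex hull of its boundary, which is the claimed description.

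The main obstacle I anticipate is not any single step but bookkeeping: making sure the parametrization of $\partial\mathcal P_n$ is pushed through $f_n$ correctly, in particular that the $n$-fold wrapping of angles collapses exactly so that the image boundary is traced once over $\theta\in[-\pi,\pi]$ and that no interior point of $\mathcal P_n$ maps outside this curve (this is where star-shapedness about $0$ does the work, since $|w^n|=|w|^n$ and along each fixed direction $\phi$ the modulus on $\mathcal P_n$ is maximized on the boundary). A secondary technical point is the polar convexity inequality for $\rho$: the $n$-th power makes $\rho$ grow, and one must check the curve does not develop an inflection — this is exactly where the constraint $|\theta/n|<\pi/n$ (so that $\sec(\theta/n)$ stays bounded and $\tan^2(\theta/n)<\tan^2(\pi/n)$) is used, and it is also the reason the clean statement fails to hold verbatim for $n=2$, where $\mathcal P_2^2=[0,1]$ is the degenerate case handled separately.
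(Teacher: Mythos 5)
Your overall strategy is sound and, in its key step, genuinely different from the paper's. The paper parametrizes the edge $[1,\omega_n]$ linearly as $s(t)=1+t\delta$ with $\delta=\omega_n-1$ and shows the curvature of $x(t)=s(t)^n$ never vanishes because $x'(t)/x''(t)=(t+\delta^{-1})/(n-1)$ is never real for $n\ge 3$ --- a two-line argument with no trigonometry. You instead invoke the polar convexity criterion $\rho^2+2(\rho')^2-\rho\rho''\ge 0$ for the radial function of the image curve. That route does work, and you should be told that the inequality you left unverified closes very cleanly: with $(\log\rho)'=\tan(\theta/n)$ and $(\log\rho)''=\tfrac1n\sec^2(\theta/n)$, dividing the criterion by $\rho^2$ gives $1+2\tan^2(\theta/n)-\big(\tfrac1n\sec^2(\theta/n)+\tan^2(\theta/n)\big)=\big(1-\tfrac1n\big)\sec^2(\theta/n)\ge 0$, valid for all $n\ge 1$. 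Your star-shapedness observation (each ray segment $[0,w]\subset\mathcal P_n$ maps onto $[0,w^n]$, and the radial maximum in each direction is attained on $\partial\mathcal P_n$) is also the right way to confirm that the curve bounds all of $\mathcal P_n^n$. So the convexity half of your proposal is correct modulo finishing a computation you only guessed at.

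The genuine flaw is in the boundary parametrization. The segment $z=\cos(\pi/n)\sec(\phi)\,e^{i\phi}$, $|\phi|\le\pi/n$, runs from $e^{-i\pi/n}$ to $e^{i\pi/n}$; these are not vertices of $\mathcal P_n$ (and are not the points $\omega_n^{\pm 1}$ you name), and the line $\mathrm{Re}(z)=\cos(\pi/n)$ is not a supporting line of $\mathcal P_n$, since the vertex $1$ lies strictly beyond it. What you have parametrized is the true edge $[1,\omega_n]$ rotated by $e^{-i\pi/n}$. The correct edge is $\cos(\pi/n)\sec(\theta/n)\,e^{i(\theta+\pi)/n}$, and the extra factor $\big(e^{i\pi/n}\big)^n=-1$ is exactly where the minus sign in the stated boundary formula comes from. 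Your attempt to restore that sign by ``shifting $\theta$ by $\pi$'' does not work: since $\rho(\theta)=\big(\sec(\theta/n)/\sec(\pi/n)\big)^n$ is not $\pi$-periodic, the curves $+\rho(\theta)e^{i\theta}$ and $-\rho(\theta)e^{i\theta}$ are genuinely different point sets (for $n=3$ the first passes through $-1$ and $+1/8$, the second --- correctly --- through $+1$ and $-1/8$). The fix is purely local, but as written the derivation of the boundary formula is wrong even though the formula you finally quote is right.
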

\begin{proof}
    The cases $n=1$ and $n=2$ are easy to verify.

    To show that $\BargmannSet{n\mid\mathrm{circ}} = \mathcal{P}_n^n$ is convex for $n \geq 3$, it suffices to show that the boundary has curvature of constant sign.
    Let $z(t)=1+t \delta$ be the points of the boundary of the first sector of $\mathcal{P}_n$, where $0 \leq t \leq 1$ and $\delta = e^{i\,2\pi/n} - 1$. It is enough to consider this segment, since all the others will be mapped to the same set under $f_n: z \to z^n$.

    Let $x(t) = z(t)^n$. For the curvature of a planar curve to change sign, it must be zero somewhere, which happens when the velocity $x'(t)$ is parallel to the acceleration $x''(t)$. Equivalenty, this happens when the ratio $x' / x''$ is real,
    \begin{equation}
        \frac{x'(t)}{x''(t)}
        =
        \frac{
            n \delta z^{n-1}
        }{
            n(n-1) \, \delta^2 \, z^{n-2}
        \
        }
        =
        \frac{t + \delta^{-1}}{n-1}.
    \end{equation}
    But $\delta$ is not real for $n \geq 3$, so the curvature cannot become zero. And since $x(t)$ is a simple curve---non-intersecting and closed,  as $x(0)=x(1)=1$---it must therefore bound a convex region. 

    Finally, we can parameterize the points $z(t)$ in a few different ways.
    First, we introduce the new parameter $\lambda=2t - 1$, such that
    \[
    z =
    e^{i\pi/n}
    \big(
        \cos{(\pi / n)}
        + i \lambda \sin{(\pi / n)}
    \big),
    \quad
    -1 \leq \lambda \leq 1.
    \]
    This parameterization captures the symmetry of the segment $z(t)$, since points corresponding to $\pm \lambda$ are symmetric about its midpoint ($\lambda = 0$).
    The points on the segment also admit a compact trigonometric parameterization.
    If we define a new parameter $\tau$ such that $\lambda = \tan(\tau/n) / \tan(\pi/n)$, then
    \begin{equation}
        z =
        \frac{
            \sec{(\tau / n)} \,
        }{
            \sec{(\pi/ n)} \,
        }
        e^{i(\pi + \tau) / n},
        \quad
        -\pi \leq \tau \leq \pi.
    \end{equation}
    Applying $f_n$, and using $e^{i\pi} = -1$, proves our final claim,
    \begin{equation}
        z^n =
        - \left(\frac{
            \sec{(\tau / n)}
        }{
            \sec{(\pi/ n)}
        }\right)^n
        e^{i\tau},
        \quad
        -\pi \leq \tau \leq \pi.
    \end{equation}
    We note that, if we define $\theta = \tau + \pi$, we obtain the parameterization
    \begin{equation}
        z^n
        =
        \cos^n
        \left( \frac{\pi}{n} \right)
        \,
        \sec^n
        \left(\frac{\theta - \pi}{n}\right)
        e^{i\theta},
        \quad
        0\leq \theta \leq 2\pi,
    \end{equation}
    which is the form obtained in Ref.~\cite{Zhang25}, Theorem 1, proven to be the boundary for general Bargmann invariants of pure states for $n=3, 4$ and conjectured for $n\geq 5$.
\end{proof}
To conclude this section, we provide an explicit realization of circulant Bargmann invariants using qutrit and qubit states.
The following construction is illustrated in Fig.~\ref{fig:1}, panel $b$.

\begin{corollary}
    Any circulant Bargmann invariant $z \in \BargmannSet{n\mid\mathrm{circ}}$  may be realized as a circulant invariant of the three-dimensional states
    \begin{equation}
        \ket{u_k}
        =
         a \ket{0}
        +  b \, \omega_n^k \ket{1}
        +  c \, y_n^k \ket {2},
    \end{equation}
    for $0\leq k < n$, where
    $y_n=\omega_n^{\ceil{n/2}}$, and $
    a^2, b^2, c^2$ are the convex coefficients of the principal root $z^{1/n}$ such that $a^2+ b^2 + c^2 = 1$ and
    \begin{equation}
        z^{1/n} = a^2
            + b^2\,\omega_n
            + c^2 \, y_n.
    \label{eq:qutrit_convex_combination}
    \end{equation}
    The boundary of $\BargmannSet{n\mid\mathrm{circ}}$ corresponds to circulant Bargmann invariants of two-dimensional states, which are realizable by taking $c=0$, resulting in the Oszmaniec-Brod-Galvão qubit states~\cite{oszmaniec_measuring_2024}
    \begin{equation}
        \ket{v_k(\varphi)}
        =
        \cos{\varphi} \ket{0}
        + \sin{\varphi} \, \omega_n^k \ket{1},
        \quad
        0\leq k < n,
        \label{eq:OBG_vectors}
    \end{equation}
    for $0 \leq \varphi \leq \pi/2$. 
    \label{corollary:qutrit_realization}
\end{corollary}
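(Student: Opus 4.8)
The plan is to establish the two realizations—qutrit-circulant and qubit—essentially as unpackings of Theorem~\ref{thm:circulant_bargmann_roots_convexity} and Corollary~\ref{cor:circulant_bargmann_invariant_convexity}, checking that the explicit vector families produce circulant Gram matrices with the claimed first row. For the qutrit claim, I would start from the fact that any $z\in\BargmannSet{n\mid\mathrm{circ}}$ has a principal root $z^{1/n}\in\mathcal{P}_n$, which by Carathéodory's theorem (exactly as in the proof of Theorem~\ref{thm:circulant_bargmann_roots_convexity}) is a convex combination of three vertices of the $n$-gon. The nontrivial point is that one can always choose those three vertices to be $1=\omega_n^0$, $\omega_n$, and $y_n=\omega_n^{\ceil{n/2}}$: this is the content of the line-shaded triangle in Fig.~\ref{fig:1}b, and it holds because the principal root lies in the angular sector between arguments $0$ and $2\pi/n$ (the first boundary segment parametrization from Corollary~\ref{cor:circulant_bargmann_invariant_convexity} shows the principal-root locus sits in the triangle with vertices $1$, $\omega_n$, $y_n$). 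Once $z^{1/n}=a^2+b^2\omega_n+c^2 y_n$ with $a^2+b^2+c^2=1$, I would verify directly that the vectors $\ket{u_k}=a\ket 0+b\,\omega_n^k\ket 1+c\,y_n^k\ket 2$ satisfy $\braket{u_j}{u_k}=a^2+b^2\omega_n^{k-j}+c^2 y_n^{k-j}$, which depends only on $k-j$, hence $G_U$ is circulant, and its $(0,1)$ entry is precisely $z^{1/n}$; therefore $\Delta_U=(z^{1/n})^n=z$.

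For the qubit specialization, I would observe that boundary points of $\BargmannSet{n\mid\mathrm{circ}}$ correspond (by the rank argument in Theorem~\ref{thm:circulant_bargmann_roots_convexity}) to principal roots lying on an edge of $\mathcal{P}_n$, i.e.\ to $c=0$, giving $z^{1/n}=\cos^2\theta+\sin^2\theta\,\omega_n$ for a suitable $\theta\in[0,\pi/2]$; the vectors collapse to $\ket{v_k(\theta)}=\cos\theta\ket 0+\sin\theta\,\omega_n^k\ket 1$. The same direct computation gives $\braket{v_j(\theta)}{v_k(\theta)}=\cos^2\theta+\sin^2\theta\,\omega_n^{k-j}$, again circulant, with $(0,1)$ entry $\cos^2\theta+\sin^2\theta\,\omega_n$, so $\Delta_V=z$. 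I would then note these are exactly the Oszmaniec--Brod--Galv\~ao states of Ref.~\cite{oszmaniec_measuring_2024}, closing the loop with the cited construction.

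The main obstacle—and the only step that is not a mechanical check—is justifying that the principal root of \emph{every} $z\in\BargmannSet{n\mid\mathrm{circ}}$ can be expressed using the \emph{fixed} triple $\{1,\omega_n,y_n\}$, rather than some $z$-dependent triple of vertices. I would handle this by combining two facts already available: (i) the principal $n$-th root map sends $\BargmannSet{n\mid\mathrm{circ}}$ onto the triangle $T=\mathrm{conv}\{1,\omega_n,y_n\}$—this follows because the principal root of a point with argument in $(-\pi,\pi]$ has argument in $(-\pi/n,\pi/n]$, and the boundary locus from Corollary~\ref{cor:circulant_bargmann_invariant_convexity}, namely $z^{1/n}=\sec(\theta/n)/\sec(\pi/n)\,e^{i(\theta+\pi)/n}$, traces out exactly the two edges $[1,y_n]$ and $[\omega_n,y_n]$ of $T$ as $\theta$ ranges over $[-\pi,\pi]$, with the segment $[1,\omega_n]$ corresponding to the outer boundary $\mathcal{P}_n$ itself; and (ii) $T\subseteq\mathcal{P}_n$, so every point of $T$ is a genuine circulant Bargmann root. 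Given (i), any $z^{1/n}\in T$ is by definition a convex combination of $1,\omega_n,y_n$, which supplies the nonnegative weights $a^2,b^2,c^2$; the edge case $c=0$ is exactly the boundary of $\BargmannSet{n\mid\mathrm{circ}}$, which recovers the qubit statement. After that, everything reduces to the circulant-Gram-matrix verification, which is routine.
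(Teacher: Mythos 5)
The routine parts of your argument---that the vectors $\ket{u_k}$ have $\braket{u_j}{u_k}=a^2+b^2\omega_n^{k-j}+c^2y_n^{k-j}$, hence a circulant Gram matrix with $(0,1)$ entry $z^{1/n}$, and that $c=0$ recovers the boundary/qubit case---are correct and coincide with the paper's. But the one step you yourself identify as nontrivial, namely that a suitable $n$-th root of \emph{every} $z\in\BargmannSet{n\mid\mathrm{circ}}$ lies in the fixed triangle $T=\mathrm{conv}\{1,\omega_n,y_n\}$, is not actually established, and the justification you give for it is wrong. You claim the principal-root map sends $\BargmannSet{n\mid\mathrm{circ}}$ \emph{onto} $T$ and that the boundary locus $\sec(\theta/n)/\sec(\pi/n)\,e^{i(\theta+\pi)/n}$ traces the edges $[1,y_n]$ and $[\omega_n,y_n]$ of $T$. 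Neither is true: as the paper's own proof of Corollary~\ref{cor:circulant_bargmann_invariant_convexity} shows, that locus is exactly the segment $s(t)=1+t(\omega_n-1)$, i.e.\ the edge $[1,\omega_n]$ of $\mathcal{P}_n$ (check $\theta=0$: it gives $\cos(\pi/n)e^{i\pi/n}=(1+\omega_n)/2$, the midpoint of $[1,\omega_n]$, nowhere near $y_n$). The map is also not onto $T$: $y_n^n=1$, whose principal root is $1$, so the vertex $y_n$ is never attained. Moreover, your branch convention (argument in $(-\pi/n,\pi/n]$) selects a root that can lie \emph{outside} $T$---for $n=5$ the root $(1+\bar\omega_5)/2$ has argument $-\pi/5$ and sits on the wrong side of the line through $1$ and $y_5$---so with that convention the claimed nonnegative coefficients $a^2,b^2,c^2$ need not exist.

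What is actually needed, and what the paper proves, is the containment $\mathrm{conv}\{0,1,\omega_n\}\subseteq T$: one chooses the root of $z$ lying in the first triangular sector $\mathrm{conv}\{0,1,\omega_n\}$ (which exists because $\mathcal{P}_n$ is invariant under multiplication by $\omega_n$), and then shows that $0$ itself is a convex combination of $1,\omega_n,y_n$, by an explicit computation split into cases: $0=(1+y_n)/2$ for even $n$, and $0\propto \cos(\pi/n)\,y_n+(1+\omega_n)/2$ for odd $n$ (using $y_n=\omega_n^{\lceil n/2\rceil}=-e^{i\pi/n}$). Since $1$ and $\omega_n$ are trivially in $T$ and $0\in T$, the whole sector is in $T$, which supplies the coefficients $a^2,b^2,c^2$. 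Your proposal is missing exactly this computation (the fact that $0\in\mathrm{conv}\{1,\omega_n,y_n\}$, which is where the specific choice $y_n=\omega_n^{\lceil n/2\rceil}$ is used); the inclusion $T\subseteq\mathcal{P}_n$ that you invoke instead points in the wrong direction and does not close the gap.
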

\begin{proof}
    Let $z \in \BargmannSet{n\mid\mathrm{circ}}$.
    Its principal root $z^{1/n}$ must lie in the first triangular sector of $\mathcal{P}_n$ --- it is the convex combination of the points $(0, 1, \omega_n)$. Now, let $y_n = \omega_n^{\ceil{n/2}}$. If $n$ is even, then $0 = 1 + y_n $; if $n$ is odd, then $0 =  K y_n + (1 + \omega_n) / 2$, with $K=\sqrt{1/2 + \cos(2\pi /n)/2}$. In either case, zero can be written as a linear combination of $(1, \omega_n, y_n)$. Therefore, $z^{1/n}$ can also be written as their convex combination, so the numbers $a,b,c$ of Eq.~\ref{eq:qutrit_convex_combination} exist.
    
    If $z$ lies on the boundary, it must lie on the segment joining 1 to $\omega_n$, so we may assume  $(a,b,c)=(\cos{\varphi}, \sin{\varphi}, 0)$ with $0 \leq \varphi \leq \pi/2$.
    We may calculate $\varphi$ as follows, given $z = \abs{z} \, e^{i \theta}$, with $0\leq \theta \leq 2\pi$. Using the parameterizations of the proof of Corollary~\ref{cor:circulant_bargmann_invariant_convexity} we see that $\sin^2\varphi = t = (1 + \lambda) / 2$, and that $\theta$ matches the $\theta = \tau + \pi$ of that Corollary. The equality $\lambda = \tan(\tau / n) / \tan(\pi / n)$ allows us to complete this chain of definitions. Using the identity $\sin^2 \varphi = (1 - \cos(2\varphi))/ 2$, we conclude that
    \begin{equation}
        \cos(2\varphi)
        =
        -\frac{\tan(\tau / n)}{\tan(\pi/n)}
        =
        \frac{\tan(\frac{\pi - \theta}{n})}{\tan(\pi/n)}.
    \end{equation}

    It is straightforward to verify that the vectors $\ket{u_k}$ give rise to a circulant Gram matrix and that
    \begin{equation}
        \braket{u_0}{u_1}
        \cdots
        \braket{u_{n-1}}{u_0}
        =z^{1/n} \cdots z^{1/n}
        = z.
    \end{equation}
\end{proof}

\section
{General Bargmann Invariants}

In this section we prove the main result of our work: any Bargmann invariant may be written as a circulant Bargmann invariant.
We also prove that any Bargmann invariant is realizable using qubit states only.

We begin by establishing the result for Bargmann roots, where the geometric intuition is more natural.
We will do this by showing that any Bargmann root is, in a sense, interior to some circulant Bargmann root.
The construction in the next proof is illustrated in Fig.~\ref{fig:2}.

\begin{figure}
    \centering
    \includegraphics[width=.6\linewidth]{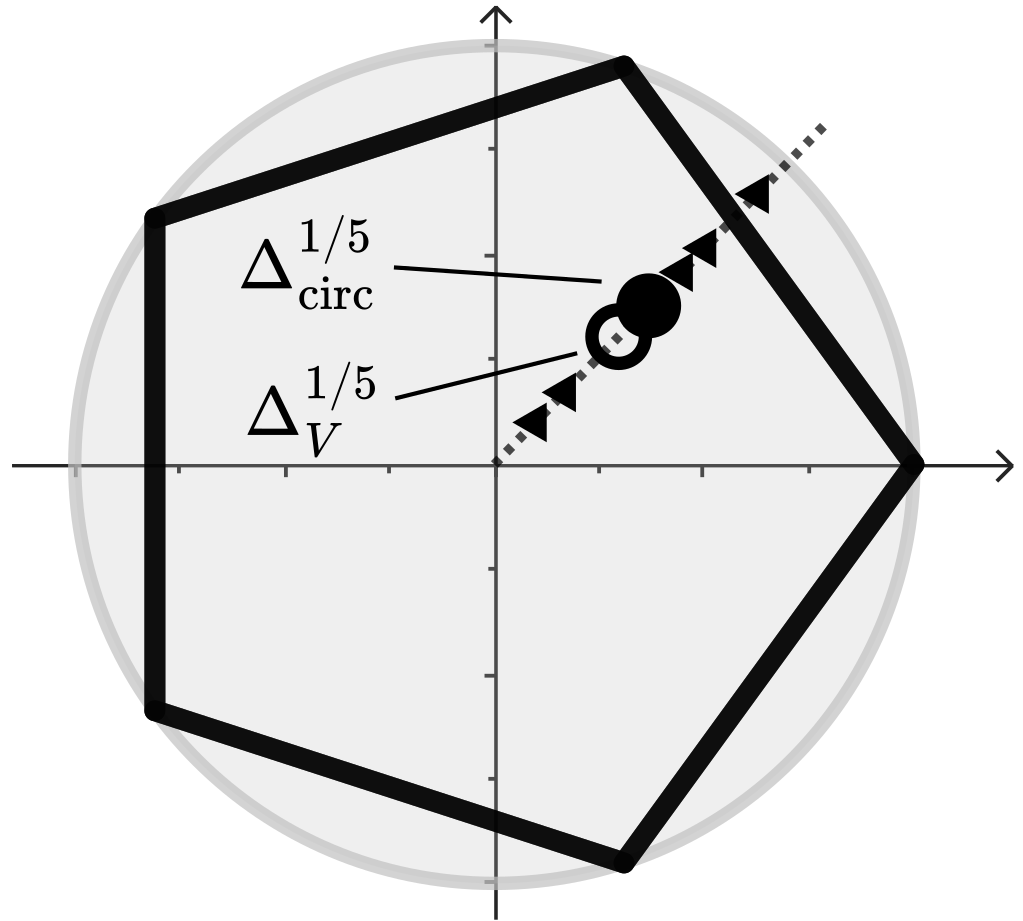}
    \caption{
    Illustration of Theorem~\ref{thm:main_theorem} for $n=5$.
    Given a tuple of states $V=(\ket{v_1},\ldots,\ket{v_n})$ we can use gauge freedom to map the inner products $\braket{v_1}{v_2}, \ldots, \braket{v_n}{v_1}$ (black triangles) onto the same complex ray, without changing the corresponding Bargmann invariant $\Delta_V$. The geometric mean of the inner products (open circle), which corresponds to the $n$-the root of their Bargmann invariant, must lie closer to the origin than their arithmetic mean, $\Delta^{1/n}_\text{circ}$ (filled circle). Both the origin and $\Delta^{1/n}_\text{circ}$ are roots of Bargmann invariants that are realizable by circulant tuples of states. Since the set $\RootSet{n\mid\mathrm{circ}}$ is convex, $\Delta_V$ must be realizable as a Bargmann invariant of a circulant tuple of states.
    }
    \label{fig:2}
\end{figure}

\begin{theorem}
     The set of Bargmann roots $\RootSet{n}$ coincides with $\RootSet{n \mid \mathrm{circ}}.$
     In particular, any Bargmann root is realizable as a circulant Bargmann root.
     \label{thm:main_theorem}
\end{theorem}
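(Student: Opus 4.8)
The plan is to prove the two inclusions separately. The inclusion $\RootSet{n \mid \mathrm{circ}} \subseteq \RootSet{n}$ is immediate, since every circulant tuple of states is in particular a tuple of states. The content is in the reverse inclusion: given an arbitrary tuple $V = (\ket{v_1}, \ldots, \ket{v_n})$ of unit vectors, I would show that some $n$-th root of $\Delta_V$ lies in $\RootSet{n \mid \mathrm{circ}} = \mathcal{P}_n$, the last equality being Theorem~\ref{thm:circulant_bargmann_roots_convexity}. Since $\RootSet{n}$ and $\RootSet{n\mid\mathrm{circ}}$ are both closed under multiplication by $\omega_n$, it does not matter which root we produce. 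The case $n=1$ is trivial ($\Delta_V = \braket{v_1}{v_1} = 1 \in \mathcal{P}_1$), so assume $n \geq 2$; and if some $\braket{v_k}{v_{k+1}}$ vanishes then $\Delta_V = 0$, whose root $0$ lies in $\mathcal{P}_n$, so we may also assume all these inner products are nonzero.

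First I would use gauge freedom: replacing $\ket{v_k}$ by $e^{i\varphi_k}\ket{v_k}$ leaves $\Delta_V$ unchanged but multiplies $\braket{v_k}{v_{k+1}}$ by $e^{i(\varphi_{k+1}-\varphi_k)}$ (indices mod $n$). Setting $\varphi_{k+1} - \varphi_k = \beta - \arg\braket{v_k}{v_{k+1}}$ with $\beta := \tfrac1n\sum_k \arg\braket{v_k}{v_{k+1}}$ is consistent, because the prescribed increments sum to zero around the cycle; so I may assume $\braket{v_k}{v_{k+1}} = r_k\, e^{i\beta}$ with $r_k := \lvert\braket{v_k}{v_{k+1}}\rvert > 0$ for every $k$. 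With this normalization the chosen $n$-th root of $\Delta_V$ is the geometric-mean point $z_{\mathrm{geo}} := e^{i\beta}\bigl(\prod_k r_k\bigr)^{1/n}$, as illustrated in Fig.~\ref{fig:2}.

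The key step is to show that the arithmetic-mean point $z_{\mathrm{arith}} := \tfrac{e^{i\beta}}{n}\sum_k r_k = \tfrac1n\sum_k \braket{v_k}{v_{k+1}}$ is itself a circulant Bargmann root. For this I would average the Gram matrix over cyclic shifts, $\bar G := \tfrac1n\sum_{j=0}^{n-1} C_n^{\,j}\, G_V\, (C_n^{\,j})^\dagger$. This $\bar G$ is positive semidefinite, being a convex combination of positive semidefinite matrices; it has unit diagonal, since cyclic permutation fixes the all-ones diagonal; and it commutes with $C_n$, hence is circulant. So $\bar G$ is a legitimate circulant Gram matrix, and conjugation by $C_n^{\,j}$ cyclically permutes both indices, so its $(0,1)$ entry equals $\tfrac1n\sum_k (G_V)_{k,k+1} = z_{\mathrm{arith}}$, which therefore lies in $\RootSet{n\mid\mathrm{circ}}$. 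Finally, AM--GM applied to the nonnegative reals $r_k$ gives $\bigl(\prod_k r_k\bigr)^{1/n} = \lambda\cdot\tfrac1n\sum_k r_k$ for some $\lambda \in [0,1]$, so $z_{\mathrm{geo}} = \lambda\, z_{\mathrm{arith}} + (1-\lambda)\cdot 0$. Since $0 = \tfrac1n\sum_{k}\omega_n^k \in \mathcal{P}_n$ for $n \geq 2$ and $\mathcal{P}_n = \RootSet{n\mid\mathrm{circ}}$ is convex (Theorem~\ref{thm:circulant_bargmann_roots_convexity}), the segment from $0$ to $z_{\mathrm{arith}}$ lies in $\RootSet{n\mid\mathrm{circ}}$, hence so does $z_{\mathrm{geo}}$, completing the inclusion.

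I expect the only delicate points to be bookkeeping rather than substance: checking that the gauge phases $\varphi_k$ can be chosen consistently around the cycle, and verifying that $\bar G$ is simultaneously positive semidefinite, unit-diagonal, and circulant, i.e.\ that cyclically averaging a normalized Gram matrix returns a normalized circulant Gram matrix. The conceptual heart, already visible in Fig.~\ref{fig:2}, is the observation that once all inner products are rotated onto a common ray, the Bargmann root is a geometric mean that AM--GM sandwiches between the origin and the arithmetic mean, while the arithmetic mean of the inner products is manifestly realized by a circulant tuple; convexity of $\mathcal{P}_n$ then does the rest.
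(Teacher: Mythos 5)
Your proposal is correct and follows essentially the same route as the paper: gauge the phases so all inner products lie on a common ray, apply AM--GM to place the Bargmann root on the segment from $0$ to the arithmetic mean of the inner products, realize that arithmetic mean as the $(0,1)$ entry of the cyclically averaged Gram matrix, and invoke convexity of $\RootSet{n\mid\mathrm{circ}}=\mathcal{P}_n$. Your handling of the edge cases and the explicit consistency check for the gauge phases are minor additions; the substance is identical.
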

\begin{proof}
    By definition, $\RootSet{n \mid \mathrm{circ}} \subseteq \RootSet{n}$. Let us prove the converse, that is, $\RootSet{n} \subseteq \RootSet{n \mid \mathrm{circ}}$.

    Let $z \in \RootSet{n}$ be a Bargmann root of order $n$. This means that there exist vectors $\ket{v_1}, \ldots, \ket{v_n} \in \mathcal{H}$,
    for some Hilbert space $\mathcal H$, such that
    \begin{equation}
        z^n = \braket{v_1}{v_2} \cdots \braket{v_n}{v_1}.
    \end{equation}
    
    Now, let us define new vectors $\ket{u_i} = e^{i \phi_i} \ket{v_i}$, for some phases $\phi_1, \ldots, \phi_n$, such that
    \begin{equation}
        \arg{\braket{u_1}{u_2}}
        = \cdots
        = \arg{\braket{u_n}{u_1}}
        = \theta = \arg{z}.
    \end{equation}
    Clearly, we still have
    \(
        z^n = \braket{u_1}{u_2} \cdots \braket{u_n}{u_1}.
    \)
    A simple solution is to define $\phi_1 = 0$ and, recursively, $\phi_i =\phi_{i-1} + \theta - \arg{\braket{v_{i-1}}{v_i}}$.

    Now, let us apply the inequality of arithmetic and geometric means (AM-GM inequality) to conclude that
    \begin{equation}
        \sqrt[n]{
            \abs{\braket{u_1}{u_2}}
            \cdots
            \abs{\braket{u_n}{u_1}}
            }
        \leq
        \frac{
            \abs{\braket{u_1}{u_2}}
            + \cdots
            + \abs{\braket{u_n}{u_1}}
        }{n}.
    \end{equation}
    Let us re-write this inequality by defining
    \begin{align}
        g &= \frac{
            \braket{u_1}{u_2}
            + \cdots
            + \braket{u_n}{u_1}
        }{n} \\
         &= \frac{
            \abs{\braket{u_1}{u_2}}
            + \cdots
            + \abs{\braket{u_n}{u_1}}
        }{n}
        \,
        e^{i\theta},
    \end{align}
    where we used the fact that the arguments of the inner products $\braket{u_1}{u_2},\ldots,\braket{u_n}{u_1}$ are all equal to $\theta$.
    The AM-GM inequality can then be written as
    $ \abs{z} \leq \abs{g}$.
    Since $z$ and $g$ are complex numbers with the same argument, they must lie on the same ray, with $z$ closer to the origin.
    In other words, $z$ lies on the segment joining zero and $g$. We can see this formally by defining the ratio $\lambda = \abs{z} / \abs{g} \in [0, 1]$. Using $z = \abs{z} e^{i\theta}$ and $g = \abs{g} e^{i\theta}$,
    \begin{align}
        z
        = \lambda g
        = \lambda g + (1-\lambda)\cdot 0,
    \end{align}
    which, by definition, means that $z \in \mathrm{conv}\{0, g\}.$
    
    Finally, we will show that there is a normalized circulant matrix $G_\mathrm{circ}$ with $[G_\mathrm{circ}]_{01} = g$.
    Let $G$ be the Gram matrix of the vectors $\ket{u_0}, \ket{u_1}, \ldots, \ket{u_{n-1}}$, and define the matrices
    \begin{equation}
        G_k = (C_n^k) \, G \, (C_n^k)^\dagger,
    \end{equation}
    which are Gram matrices arising from cyclic permutations of $\ket{u_0}, \ket{u_1}, \ldots, \ket{u_{n-1}}$.
    Their arithmetic mean,
    \begin{equation}
        G_\mathrm{circ} = \frac{G_0 + \cdots + G_{n-1}}{n},
    \end{equation}
    is also a normalized circulant Gram matrix, because it is the convex combination of Gram matrices, and $[G_\mathrm{circ}]_{01} = g,$ as wanted.
    
    Therefore, $z$ is in the segment joining $0$ and $g$, numbers that belong to $\RootSet{n \mid \mathrm{circ}}$. Since $\RootSet{n \mid \mathrm{circ}}$ is convex, $z$ must belong to it as well.
    We thus conclude that $\RootSet{n} = \RootSet{n \mid \mathrm{circ}}$, as wanted.
\end{proof}

Our main result now follows as a corollary.

\begin{corollary}
    The sets $B_n$ and $\BargmannSet{n}$ coincide with $\BargmannSet{n \mid \mathrm{circ}} = \mathcal{P}_n^n.$
    In particular, the sets are convex and any Bargmann invariant may be realized as a circulant Bargmann invariant of three-dimensional states (i.e., using qutrit states).
    \label{cor:Bn_equals_Bn_circ}
\end{corollary}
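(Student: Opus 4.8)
The plan is to deduce everything from Theorem~\ref{thm:main_theorem} together with the facts already established in the circulant case. First I would note that, directly from the definitions of $\RootSet{n}$ and $\BargmannSet{n}$, the set $\BargmannSet{n}$ is precisely the image $f_n(\RootSet{n})$ under the $n$-th power map $f_n:z\mapsto z^n$: the inclusion $f_n(\RootSet{n})\subseteq\BargmannSet{n}$ holds because $z\in\RootSet{n}$ means $z^n\in\BargmannSet{n}$, and conversely any $w\in\BargmannSet{n}$ has some $n$-th root $z$, which then lies in $\RootSet{n}$ and satisfies $f_n(z)=w$. The identical statement holds for the circulant versions. Applying $f_n$ to both sides of $\RootSet{n}=\RootSet{n\mid\mathrm{circ}}$ from Theorem~\ref{thm:main_theorem} therefore gives $\BargmannSet{n}=\BargmannSet{n\mid\mathrm{circ}}$, and applying $f_n$ to $\RootSet{n\mid\mathrm{circ}}=\mathcal{P}_n$ from Theorem~\ref{thm:circulant_bargmann_roots_convexity} shows this common set equals $\mathcal{P}_n^n$.

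Next I would dispose of the mixed-state set $B_n$. One inclusion is immediate, since pure states are a special case of mixed states, so $\BargmannSet{n}\subseteq B_n$. For the reverse inclusion, I would invoke the observation recorded in the Definitions section: using the spectral decomposition $\rho_i=\sum_j p_{ij}\ketbra{v_{ij}}{v_{ij}}$ and expanding $\Tr(\rho_1\cdots\rho_n)$ multilinearly, every mixed-state invariant is a convex combination of pure-state invariants $\Delta_V$, each lying in $\BargmannSet{n}$. Because $\BargmannSet{n}=\mathcal{P}_n^n$ is convex by Corollary~\ref{cor:circulant_bargmann_invariant_convexity}, that convex combination again lies in $\BargmannSet{n}$; hence $B_n\subseteq\BargmannSet{n}$, and the two sets coincide.

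Finally, the remaining assertions require no extra work: convexity of all three sets is Corollary~\ref{cor:circulant_bargmann_invariant_convexity}, and the qutrit realizability of every element of $\BargmannSet{n\mid\mathrm{circ}}=\BargmannSet{n}=B_n$ is exactly the content of Corollary~\ref{corollary:qutrit_realization}. I do not anticipate any real obstacle in this corollary, which is essentially a bookkeeping step assembling Theorem~\ref{thm:main_theorem} with the circulant analysis; the only point that deserves a moment of care is the direction $B_n\subseteq\BargmannSet{n}$, where it is convexity of $\BargmannSet{n}$ — not evident a priori, and itself a consequence of the teardrop characterization — that allows the multilinear expansion to remain inside the set.
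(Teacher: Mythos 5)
Your proposal is correct and follows essentially the same route as the paper: equality of $\BargmannSet{n}$ and $\BargmannSet{n\mid\mathrm{circ}}$ via the $n$-th power map applied to Theorem~\ref{thm:main_theorem}, then $B_n=\BargmannSet{n}$ by spectral decomposition plus the convexity of $\BargmannSet{n\mid\mathrm{circ}}=\mathcal{P}_n^n$ from Corollary~\ref{cor:circulant_bargmann_invariant_convexity}, with qutrit realizability delegated to Corollary~\ref{corollary:qutrit_realization}. Your explicit remark that convexity is the ingredient making the multilinear expansion close up is exactly the point the paper uses in writing $B_n\subseteq\operatorname{conv}\BargmannSet{n}=\BargmannSet{n}$.
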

\begin{proof}
    First, the sets $\BargmannSet{n}$ and $\BargmannSet{n \mid \mathrm{circ}}$ are equal, since they are obtained from $\RootSet{n}$ and $\RootSet{n \mid \mathrm{circ}}$, respectively, by application of the map $f: z \to z^n$, and these last two sets are equal.

    Now, let $\rho_1, \ldots, \rho_n \in \mathcal D(\mathcal H^d)$ be density matrices of dimension $d$ with spectral decompositions
    \begin{equation}
        \rho_i
        = \sum_{j=0}^{d-1}
            \lambda^{(i)}_j \ketbra{v^{(i)}_j},
    \end{equation}
    where $\lambda^{(i)}_j \geq 0$ and $\lambda^{(i)}_0 + \cdots + \lambda^{(i)}_{d-1} = 1$ for all
    $1 \leq i \leq n$ and $0\leq j <d$.
    Therefore,
    \begin{align*}
        \Delta &= \Tr{\rho_1 \cdots \rho_n} \\
        &=
        \sum_{\alpha \in A}
        \Lambda_\alpha \Tr{
            \ketbra{v_{\alpha_0}^{(1)}}
            \cdots
            \ketbra{v_{\alpha_{n-1}}^{(n)}}},
    \end{align*}
with $A=\{0, \ldots, d-1\}^n$ and $\Lambda_\alpha = \lambda_{\alpha_0}^{(1)} \cdots \lambda_{\alpha_{n-1}}^{(n)} \geq 0$ for all $\alpha \in A.$
It is clear that $\sum_{\alpha \in A} \Lambda_\alpha = 1$, so $\Delta$ is a convex combination of Bargmann invariants of pure states and $B_n \subseteq \operatorname{conv} \BargmannSet{n} = \BargmannSet{n}$. On the other hand, by definition, $\BargmannSet{n} \subseteq B_n$, which proves that $B_n = \BargmannSet{n}$.
\end{proof}

Finally, we show that any Bargmann invariant may be realized using qubit states. Having established Corollary~\ref{cor:Bn_equals_Bn_circ}, this follows as a consequence of Theorem 2 in Ref.~\cite{Li25}.
Our proof shows how to construct an explicit realization.
\begin{theorem}
    The set $\BargmannSet{n, 2}$ coincides with $\BargmannSet{n}.$
    In particular, any Bargmann invariant is realizable as a qubit Bargmann invariant (but not necessarily of circulant form).
    \label{thm:qubit_realization}
\end{theorem}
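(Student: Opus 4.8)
The plan is to reduce everything to the circulant case already settled in Corollary~\ref{cor:Bn_equals_Bn_circ} and then explicitly convert a circulant qutrit realization into a (non-circulant) qubit realization. By that corollary, any $\Delta \in B_n$ can be written as $\Delta = z^n$ for some $z \in \mathcal{P}_n$, realized by the circulant qutrit tuple $\ket{u_k} = a\ket{0} + b\,\omega_n^k\ket{1} + c\,y_n^k\ket{2}$ of Corollary~\ref{corollary:qutrit_realization}, with $z = a^2 + b^2\omega_n + c^2 y_n$. The key observation is that, since $z$ lies in the principal triangular sector with vertices $1, \omega_n, y_n$, and $y_n$ itself lies on the boundary arc, one can instead aim for a \emph{different} decomposition of $z$ supported on just two points of the unit circle — namely $z = \cos^2\theta + \sin^2\theta\,\omega_n$ for a suitable $\theta$ — but this only covers the boundary, not the interior teardrop. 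So the honest route is: the interior points genuinely need a qutrit-or-equivalent, and the trick for qubits must exploit that a Bargmann invariant is a \emph{product} of $n$ inner products, not a single geometric mean.

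Concretely, I would take the Oszmaniec–Brod–Galvão qubit vectors $\ket{v_k(\theta)} = \cos\theta\ket{0} + \sin\theta\,\omega_n^k\ket{1}$ for $1 \le k < n$ — these are $n-1$ qubit states, not $n$ — and introduce one extra ``pivot'' vector $\ket{x}$, forming the open tuple $(\ket{x}, \ket{v_1}, \dots, \ket{v_{n-1}})$ so that
\begin{equation}
    \Delta = \braket{x}{v_1}\braket{v_1}{v_2}\cdots\braket{v_{n-1}}{x}.
\end{equation}
The middle product $\braket{v_1}{v_2}\cdots\braket{v_{n-2}}{v_{n-1}}$ is a fixed number $r\,e^{i\psi}$ computable from $\theta$ and $n$ (it is the principal root expression raised to the power $n-2$, since consecutive OBG overlaps all equal $\cos^2\theta + \sin^2\theta\,\omega_n$). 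Then $\braket{x}{v_1}\braket{v_{n-1}}{x} = \bra{x}(\ketbra{v_1}{v_{n-1}})\ket{x}$ depends on $\ket{x}$, and the task becomes: choose $\theta$ and the unit vector $\ket{x}$ so that the product of this quadratic form with the fixed constant equals the target $\Delta$. Since $\ketbra{v_1}{v_{n-1}}$ is a rank-one operator acting on the qubit, $\bra{x}(\ketbra{v_1}{v_{n-1}})\ket{x} = \braket{x}{v_1}\braket{v_{n-1}}{x}$ ranges, as $\ket{x}$ varies over qubit states, over a disk in $\mathbb{C}$ whose geometry I would compute explicitly in terms of $\braket{v_{n-1}}{v_1}$; combined with freedom in $\theta$, this should sweep out exactly $\mathcal{P}_n^n$.

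The main obstacle I anticipate is the \emph{explicit solvability} of the resulting equations — showing that for every target $\Delta = z^n$ with $z \in \mathcal{P}_n$ there is a genuine choice of $\theta \in [0,\pi/2]$ and normalized $\ket{x}$, and giving $\ket{x}$ ``implicitly'' as the statement promises. The cleanest way around this is probably not to solve brute-force but to argue dimensionally/topologically: fix $\theta$ so that the OBG overlaps realize the correct \emph{modulus} behaviour, note that as $\ket{x}$ ranges over the Bloch sphere the quantity $\braket{x}{v_1}\braket{v_{n-1}}{x}$ traces a closed region containing $0$ (take $\ket{x} \perp \ket{v_1}$) up to some maximal modulus (take $\ket{x}$ aligned optimally), and that rotating the global phase of the middle factor via $\theta$ lets the product cover the full teardrop $\mathcal{P}_n^n$ by Corollary~\ref{cor:circulant_bargmann_invariant_convexity}. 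I would then record the boundary case ($\ket{x}$ forced to be one of the OBG vectors, recovering the pure circulant qubit family of Corollary~\ref{corollary:qutrit_realization} with $c=0$) as a consistency check, and present the final $\ket{x}$ via the equation it must satisfy rather than in closed form.
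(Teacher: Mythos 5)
Your construction is essentially the paper's: OBG qubit vectors $\ket{v_1},\dots,\ket{v_{n-1}}$ plus one free pivot $\ket{x}$, with $\Delta=\mel{x}{V}{x}$ for $V=\ketbra{v_1}\cdots\ketbra{v_{n-1}}$, and the observation that $\ket{x}\perp\ket{v_1}$ gives the value $0$. The one step you leave vague --- ``combined with freedom in $\theta$, this should sweep out exactly $\mathcal{P}_n^n$'' --- is precisely where the paper is sharper, and it closes the gap with two short moves: (i) choose $\theta$ not freely but so that the \emph{boundary} point $\Delta'$ of $\BargmannSet{n\mid\mathrm{circ}}=\BargmannSet{n}$ lying on the ray from $0$ through the target $\Delta$ is realized by the circulant qubit tuple (Corollary~\ref{corollary:qutrit_realization} with $c=0$); (ii) note that $0$ and $\Delta'$ both lie in the numerical range $W(V)$ of the $2\times 2$ operator $V$, which is convex by the Hausdorff--Toeplitz theorem, so $\Delta\in[0,\Delta']\subseteq W(V)$ and the required $\ket{x}$ exists. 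This replaces your proposed ``dimensional/topological'' sweeping argument with a single convexity statement and avoids having to compute the region traced by $\braket{x}{v_1}\braket{v_{n-1}}{x}$ at all; incidentally, that region is a filled \emph{ellipse} with foci at the eigenvalues $0$ and $\braket{v_{n-1}}{v_1}$ of the rank-one operator (elliptical range theorem), not a disk, though only its convexity is ever needed. With step (i) and the Hausdorff--Toeplitz invocation added, your argument is complete and coincides with the paper's.
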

\begin{proof}
    By definition, we have $\BargmannSet{n, 2} \subseteq \BargmannSet{n}$. Let us prove the converse.
    
    Let $\Delta \in \BargmannSet{n}$. Consider the ray starting at zero and passing through $\Delta$. Because $\BargmannSet{n}$ is compact, the ray must cross the boundary at a point $\Delta'$, which must be realizable as a circulant Bargmann invariant of two-dimensional states. In other words, there exists $0 \leq \varphi \leq \pi/2$ such that
    \begin{equation}
        \Delta'
        = \braket{v_0}{v_1} \cdots \braket{v_{n-1}}{v_0}
        = \mel{v_0}{V}{v_0},
    \end{equation}
    where the $\ket{v_k} = \ket{v_k(\varphi)}$ are defined in Eq.~\ref{eq:OBG_vectors}, and
    \begin{equation}
        V = \ketbra{v_1} \ketbra{v_2} \cdots \ketbra{v_{n-1}}.
    \end{equation}
    Now, take $\ket{v_1^\perp} \in \mathcal{H}_2$ orthogonal to $\ket{v_1}$, so that
    \begin{equation}
        0
        = \braket{v_1^\perp}{v_1} \cdots \braket{v_n}{v_1^\perp}
        = \mel{v_1^\perp}{V}{v_1^\perp}.
    \end{equation}
    From this construction, we see that both 0 and $\Delta'$ belong to the \emph{numerical range} of $V$,
    \begin{equation}
        W(V)
        =
        \Big\{
        \mel{x}{V}{x}
        \, \big\vert \,
        \ket{x} \in \mathcal{H}_2, \,
        \braket{x} = 1
        \Big\}.
    \end{equation}
    According to the Hausdorff-Toeplitz theorem, the set $W(V)$ is convex, so it must contain $\Delta$ as well. In other words, there exists $\ket{x} \in \mathcal{H}_2$ such that
    \begin{equation}
        \Delta
        = \braket{x}{v_1} \braket{v_1}{v_2} \cdots \braket{v_{n-1}}{x},
    \end{equation}
    as wanted.
\end{proof}

\section{Conclusion}
In this work, we have provided a comprehensive and elementary characterization of the complex-valued range $B_n$ of Bargmann invariants, quantities crucial for the description of basis-independent coherence, imaginarity, Kirkwood-Dirac quasiprobabilities, OTOCs, and geometric phases.
We have shown that the range of all possible Bargmann invariants of order $n$ is precisely equal to the range of Bargmann invariants arising from tuples of pure qutrit states described by Gram matrices of circulant form. As the borders of these convex regions have been described in previous work, we now have a complete geometrical description of the range of each $B_n$.

Our characterization of $B_n$ gives a solution to many open problems that are related to the range of Bargmann invariants. It describes the possible extent of negativity and imaginarity in sequential weak values \cite{mitchison2007sequential} and extended Kirkwood-Dirac quasiprobability distributions \cite{halpern2018quasiprobability}. In this latter case, our results give an effective answer about the minimum invariant order required for a given value of negativity. Our results also bound the values that can be taken by higher-order OTOCs, proposed in \cite{abanin25} as tools for demonstrations of quantum computational advantage.

In addition, as pointed out in \cite{fernandes2024unitaryinvariant}, our results also find applications in the characterization of geometric phase experiments involving cyclic sequences of projective measurements, i.e. where a quantum state $\ket{\psi_1}$ undergoes a sequence of projections onto states $\ketbra{\psi_2}{\psi_2}$, $\ketbra{\psi_3}{\psi_3}, \ldots, \ketbra{\psi_n}{\psi_n}$, with a final projection on the initial state $\ketbra{\psi_1}{\psi_1}$. It is well known that the Bargmann invariant $\Delta_{12...n}$ associated to this tuple of states  determines both the post-selection probability,  given by $|\Delta_{12...n}|^2$, and the geometric phase acquired by the state which is given by $\text{arg}(\Delta_{12...n})$. Our results thus completely characterize the region of allowed values of the geometric phase and postselection probability of such experiments, whereas previous results only provided a characterization for experiments with 3 \cite{fernandes2024unitaryinvariant} and 4 \cite{Zhang25} projections. These geometric phases can be observed, for example, in a Mach-Zender interferometer, where one arm of the interferometer undergoes a cyclic series of $n$ projective measurements, implemented by polarization filters \cite{Berry09}.
Our bounds on both the absolute value and the phase of Bargmann invariants translate into precise trade-offs between the fraction of photons absorbed in such experiments, and the geometric phase that can be observed.

Our results also bear directly on the characterization of multiphoton indistinguishability. Linear-optical experiments featuring multiphoton inputs have output event probabilities that depend only on unitary-invariants of the spectral functions describing each photon's internal degrees of freedom. A paradigmatic two-photon example is the Hong-Ou-Mandel (HOM) effect  \cite{HOM}, where the probability of bunching at the output of a balanced beam-splitter gives a direct estimate of the overlap between the two single-photon spectral functions describing e.g. polarization and time-of-arrival. Experiments with 3 or more photons can depend on so-called collective photonic phases \cite{Tich14, shchesnovich2015partial, Menssen2017}, which are just the phases of Bargmann invariants of the spectral functions. Such experiments have been performed for 3 \cite{Menssen2017} and 4 \cite{Jones2020} photons; our characterization gives quantitative trade-offs between the level of two-photon indistinguishability, given by overlaps/HOM visibilities, and the measurable collective photonic phases of any order.

We also showed that any Bargmann invariant is realizable using qubit states only, which may find important applications in experiments. A simple consequence is that the value of a single Bargmann invariant cannot be used to witness Hilbert space dimension. This points to the interest in better understanding the information revealed by the values of multiple invariants of a single set of states. It is known that the values of multiple invariants \textit{can} be used to witness Hilbert space dimension \cite{giordani2021witnessing}, and in fact to solve any unitary-invariant problem associated with the set of states \cite{chien2016characterization, oszmaniec_measuring_2024}. Another application that could profit from a better understanding of the range of multiple Bargmann invariants is the study of multiphoton indistinguishability, where such results for multiple overlaps have already been used to simplify experimental set-ups \cite{giordani2020experimentalquantification}.

\textit{Note added:} While finishing this manuscript, we became aware of the related and independent preprint by J. Xu \cite{xu25}, which also provides a complete characterization of $B_n$, albeit without the explicit description of realizations using circulant qutrit states.

\begin{acknowledgments}
S.S.P.\ acknowledges FCT - Fundação para a Ciência e Tecnologia, I.P. in the framework of the projects UIDB/04564/2020 and UIDP/04564/2020, with DOI identifiers 10.54499/UIDB/04564/2020 and 10.54499/UIDP/04564/2020, respectively.
J.G.\ acknowledges support by Centro de Matem\'atica da Universidade de Coimbra (CMUC) - UID/MAT/00324.
E.F.G. and L.N. acknowledge support from FCT – Fundação para a Ciência e a Tecnologia (Portugal)
via project
CEECINST/00062/2018, and from Horizon Europe project EPIQUE (Grant No. 101135288).
\end{acknowledgments}

\bibliography{references}

\end{document}